\documentclass[conference]{IEEEtran}
\IEEEoverridecommandlockouts
\usepackage{amsmath,amsthm,amsfonts,dsfont,bm}
\allowdisplaybreaks
\usepackage{physics}
\usepackage{textgreek}
\usepackage{enumitem}
\usepackage{multirow}
\usepackage{algpseudocode}
\usepackage{algorithm}
\usepackage{xcolor}
\usepackage{soul}
\usepackage{tikz}
\usetikzlibrary{positioning}
\usepackage[noadjust]{cite}
\usepackage{comment}

\usepackage{subcaption}
\usepackage{wrapfig}

\definecolor{Mycolor}{rgb}{.2,.2,.8}

\newtheorem{theorem}{Theorem}

\newcommand{\Xc}{\mathcal{X}}

\newcommand{\Pb}{\mathbb{P}}

\newcommand{\Eb}{\mathbb{E}}
\newcommand{\Rb}{\mathbb{R}}
\newcommand{\Zb}{\mathbb{Z}}

\newcommand{\dm}{\mathrm{d}}

\usepackage{amssymb}

\title{\LARGE{Staggered Quantizers for Perfect Perceptual Quality: A Connection between Quantizers with Common Randomness and Without}}



\author{\IEEEauthorblockN{Ruida Zhou}
\IEEEauthorblockA{{Department of Electrical Engineering} \\
{University of California, Los Angeles}\\
Los Angeles, CA\\
ruida@g.ucla.edu}
\and
\IEEEauthorblockN{Chao Tian}
\IEEEauthorblockA{{Department of Electrical and Computer Engineering} \\
{Texas A\&M University}\\
College Station, TX \\
chao.tian@tamu.edu}
}

\textfloatsep=0.15cm
\intextsep=0.13cm
\abovecaptionskip=0.08cm
\belowcaptionskip=0.03cm

\abovedisplayskip=0.1cm
\belowdisplayskip=0.1cm

\begin{document}
\maketitle

\begin{abstract}
The rate-distortion-perception (RDP) framework has attracted significant recent attention due to its application in neural compression. It is important to understand the underlying mechanism connecting procedures with common randomness and those without. Different from previous efforts, we study this problem from a quantizer design perspective. By analyzing an idealized setting, we provide an interpretation of the advantage of dithered quantization in the RDP setting, which further allows us to make a conceptual connection between randomized (dithered) quantizers and quantizers without common randomness. This new understanding leads to a new procedure for RDP coding based on staggered quantizers. 
\end{abstract}

\section{Introduction} \label{sec:intro}

Compression plays an important role in the efficient representation of information content, particularly visual content. Traditionally, the tradeoff between the compression rate and the incurred distortion has been studied under two different but related frameworks: the quantization framework \cite{gersho1992vector} and the rate-distortion theory \cite{berger1971rate} framework. In the former, the focus is on the design of quantizers that compress data samples one at a time (i.e., scalar quantization) or a few at a time (i.e., vector quantization), while the latter focuses on the fundamental limits of lossy compression by allowing an asymptotically large number of samples to be encoded together. 


Largely driven by the recent emergence of the neural compression, the issue of perceptual quality has led to the study of the problem of rate-distortion-perception (RDP) tradeoff \cite{blau2019rethinking,wagner2022rate,theis2021advantages,chen2022rate,serra2023computation,hamdi2023rate,salehkalaibar2024rate,niu2023conditional}. In this formulation, a new quality constraint, which was introduced to capture the perceptual quality loss due to compression, is further imposed in addition to the existing objective distortion constraint. Mathematically, this formulation \cite{blau2019rethinking} requires the probability distribution of the content after decompression to be close to that of the source content before compression; the case when the two distributions are exactly the same is often referred to as ``perfect perceptual quality'', which is our focus in this work. 

The RDP problem has attracted significant recent research attention, and several studies in this area revealed that common randomness plays an important role in this setting \cite{theis2021advantages,chen2022rate}. More precisely, the lack of common randomness can cause significant performance loss compared to methods that have such common randomness at their disposal, and this loss is particularly severe for scalar quantization. There are two known prevailing methods of introducing common randomness for RDP coding. The first is based on probabilistic sampling \cite{li2018strong}, and the second is through universal dithered quantization  \cite{ziv1985universal,zamir1992universal}. The first approach requires the knowledge of a target joint distribution between the samples and the compressed version, and furthermore, involves a rather complex sampling procedure. 
The dither-based approach, on the other hand, is simpler to implement and thus more attractive, however, its architecture places an inherent constraint on the eventual probability distribution, and though widely used, it is not clear what actually makes it suitable for the RDP setting. 

One piece of the puzzle has thus far been missing between the compression procedures without common randomness (e.g., scalar quantization with deterministic encoder) and those with a large amount of common randomness (dithered quantizers), particularly from a quantizer design perspective. That is, quantizers with deterministic encoders require no common randomness, and the dither-based approach will utilize common randomness on an uncountable set in a less transparent manner. What exactly is the underlying mechanism that lends the dither-based approach the advantage, and is there an effective procedure with an intermediate amount of common randomness? Although these questions have previously been studied under the rate-distortion framework with asymptotic large sample block size \cite{saldi2014randomized}, the asymptotic nature of such analysis makes the mechanism rather opaque. 

In this work, we develop a better understanding of these issues under the quantization framework. Using a decomposition perspective, we provide a new way to understand the mechanism from which procedures utilizing common randomness obtain the advantage. We first focus on an idealized setting on the unit circle, and provide a complete analysis of the performance. Based on these understandings, we provide a new approach to introduce common randomness using staggered quantizers. We further discuss the application of such an approach to other sources. It should be noted that staggered quantizers have been previously used for multiple description coding \cite{tian2005new,tian2005staggered,samarawickrama2010m} which offered surprisingly competitive performance compared to more sophisticated approaches.


\section{Backgrounds}\label{sec:background}

\subsection{Rate-distortion function and quantizers} \label{sec:RD-Quan}

Let the data source $X$ be a real-valued random variable, with a distribution $P_X$ on the alphabet $\mathcal{X}$. The reconstruction alphabet is denoted as $\hat{\mathcal{X}}$. Given a distortion measure $d: \mathcal{X}\times \hat{\mathcal{X}}\rightarrow [0,\infty)$, e.g., the squared error distortion $d(x,\hat{x})=(x-\hat{x})^2$ when $\mathcal{X}=\hat{\mathcal{X}}=\mathbb{R}$, the (informational) rate-distortion function under a distortion constraint $D$ is defined as
\begin{align*}
R(D)=\min_{P_{\hat{X}|X}: \mathbb{E}{d(X,\hat{X})}\leq D}I(X;\hat{X}),
\end{align*}
where $I(\cdot;\cdot)$ is the mutual information function. 

Rate-distortion theory deals with the setting when an infinite number of samples is allowed to be encoded together. In practice, samples are usually encoded one or few at a time, referred to as scalar quantization and vector quantization, respectively. In particular, a scalar quantizer consists of an encoding mapping $f:\mathcal{X}\rightarrow \mathbb{Z}$ which determines the representation index to assign to a sample, and a decoding function $g: \mathbb{Z}\rightarrow \hat{\mathcal{X}}$ which assigns a reconstruction point to each representation index. Therefore, $\hat{X}=g(f(X))$. Indices are allowed to be further entropy-coded, e.g., using Huffman code. When entropy coding is allowed, it is usually referred to as entropy-constrained scalar quantization (ECSQ), whereas when the number of quantization levels is fixed, it is usually referred to as fixed-rate quantization. 

Universal dithered quantizer utilizes a uniform quantizer with stepsize $\Delta$ in the encoding and decoding process \cite{zamir2014lattice}. Different from classic deterministic quantizers, a random noise $Z$, independent of the data samples and uniformly distributed on the base interval $(-\Delta/2,\Delta/2]$, is available at both the encoder and the decoder. The noise $Z$ is first added to the sample as $X+Z$, which is then quantized to its nearest neighbor using the deterministic uniform quantizer, and finally the same dither noise $Z$ is subtracted at the decoder. It was shown \cite{ziv1985universal,zamir1992universal} that using this procedure $\hat{X}=X+\tilde{Z}$, where $\tilde{Z}$ has the same marginal probability distribution as $Z$ and is also independent of $X$, and conditioned on the common randomness, the optimal entropy coding rate (of the lattice index) is exactly $H(f(X+Z)|Z)=I(X;X+Z)$. 
Note that such a rate is impossible to achieve in practice, since it requires one entropy code for a specific realization of the noise $Z=z$: Firstly, the usual technique of universal compression becomes unrealistic because it is unlikely (with zero probability) to have identical noise realizations and therefore very few samples to estimate the corresponding probability distribution; secondly, unless the distribution is analytically simple, storing the distribution or the entropy coding codewords for each noise realization is also unrealistic. Entropy coding of $f(X + Z)$ can be considered instead, resulting in a rate of $H(f(X+Z))$.

\subsection{Rate-distortion-perception function and RDP coding} \label{sec:RDP-coding}

The (informational) rate-distortion-perception function can be viewed as a generalization of the rate-distortion function, which under a given distortion constraint $D$ and a given perception constraint $P$, is defined as
\begin{align}
R(D,P)=\min_{P_{\hat{X}|X}: \mathbb{E}{d(X,\hat{X})}\leq D, w(P_X,P_{\hat{X}})\leq P}I(X;\hat{X}), \label{eqn:RDP}
\end{align}
where $w(\cdot,\cdot)$ is a measure quantifying the distance between two probability distributions, e.g., KL divergence, total variation, or Wasserstein metric. We are mainly interested in the case of perfect perception, i.e., 
\begin{align}
R(D,0)=\min_{P_{\hat{X}|X}: \mathbb{E}{d(X,\hat{X})}\leq D, P_{\hat{X}}=P_{X}}I(X;\hat{X}), \label{eqn:RDP0}
\end{align}
which is independent of the choice of $w(\cdot, \cdot)$ measure. Similar to the rate-distortion setting, it was shown \cite{theis2021coding} that the RDP function is also the fundamental limit of any encoding and decoding function pairs in the RDP setting. It was established in \cite{yan2021perceptual} that under the MSE distortion measure, $R(D,0)= R(\frac{D}{2},\infty).$
These results are again asymptotic in nature, meaning the corresponding codes are allowed to encode a large number of samples together. 

For scalar quantization (also called one-shot coding), it is possible to achieve the following coding rate \cite{theis2021coding}
$R(D,P)+\log(R(D,P)+1)+4$, using the sampling-based approach mentioned earlier, which is at a higher rate than the RDP function. The loss can be significant at the usual range of practical compression applications, e.g., at a target rate of $4$bits with a potential loss of more than $4$bits. It is not known whether this is the best rate possible for one-shot coding. 

It has been shown that quantizers without common randomness can suffer significantly in RDP coding, and common randomness is important. Dithered quantizer appears to be a natural match and can be utilized. However, the output of the original dithered quantizer has a distribution the same as $X+\tilde{Z}$, and therefore, there is a mismatch with the target RDP-optimal distribution. Particularly, for the perfect perceptual quality setting, the distribution of $X+\tilde{Z}$ may be different from $P_X$, and a distribution shaping procedure is needed at the decoder, at the expense of increased distortion. This shaping can be accomplished using a nonlinear function $\phi(\cdot)$ operating on the output of the dithered quantizer $X + \tilde{Z}$, and neural networks can be used to fulfill this role. 

\subsection{Quantization on the unit circle}

Consider the following idealized \textit{unit-circle} setting: the data signal $X$ to be compressed is uniformly distributed over the unit circle $\Xc = \{x \in \Rb^2: \|x\|_2 = 1\}$. The distortion is measured using the square error function  $d(x,\hat{x})=\| x - \hat{x} \|_2^2$, the coding rate is set at $1$ bit per sample, and the reconstruction $\hat{X}$ is required to be of perfect perception quality, i.e., $\hat{X} \stackrel{d}{=} X$. Since the signal has its domain is the unit circle, we can represent any $x \in \Xc$ by its angle $\theta(x) \in \Theta \triangleq (-\pi, \pi]$ such that $x = (\cos(\theta(x)), \sin(\theta(x)))$. Fixed-rate quantization at rate 1 on this data source was previously considered in \cite{theis2021advantages} to illustrate the advantage of stochastic (dithered) encoders. Two types of quantizers were considered  in \cite{theis2021advantages}: 

\begin{itemize}

\item {Quantizer with a deterministic encoder}: Since there is no common randomness, to obtain perfect perception quality, decoder side noise must be injected. It was shown that the optimal quantization procedure in this case is as follows:
\begin{align*}
f(\theta(x))=\left\{
\begin{matrix}
1&  \theta(x)\in [0, \pi)\\
-1& \text{otherwise}
\end{matrix}
\right., \quad g(i) = \frac{i\times \pi}{2}- \bar{Z},
\end{align*}
where $\bar{Z}$ is a private random variable at the decoder side, independent of $X$, distributed uniformly on $[-\pi/2, \pi/2)$. We here view $g(i)$ as a random function, and therefore did not include $\bar{Z}$ as part of the function input. This procedure gives a distortion $2-8/\pi^2$.

\item {Dithered quantizer}: Let $Z$ be distributed uniformly over $[-\pi/2, \pi/2)$ independent of $X$, dithered quantization operates as follows:
\begin{align*}
f(Y)=\left\{
\begin{matrix}
1& Y  \in [0, \pi) \textnormal{ mod } 2\pi \\
-1& \text{otherwise}
\end{matrix}
\right.,~ g(i) = \frac{i\times \pi}{2} - Z,
\end{align*}
where $Y = \theta(x)+Z$ and $\theta(\hat{x})=g(f(\theta(x)+Z))$. By the property of the dither quantizer, we have $\theta(\hat{X}) = \theta(X) + \tilde{Z} \mod 2\pi$, where $\tilde{Z} \stackrel{d}{=} Z$ and is independent of $X$. The distortion thus induced is $2-4/\pi$, which is about 38.9\% lower than that using the deterministic encoder.
\end{itemize}

The dithered quantizer performs better here for two reasons: 1) The distribution of $\theta(X) + \tilde{Z} \mod 2\pi$ is uniform on the unit circle, and thus naturally matches the perceptual requirement; 2) If the perception consideration were not present, the first approach could choose a single reconstruction point to minimize the distortion, however now it is forced to utilize private randomness at the decoder, over $1/2$ of the unit circle, to produce the desired distribution; this private randomness thus induces additional distortion. Fig. \ref{fig:illustration} (a) and (b) illustrate this effect of the two procedures.

\section{Quantization on the unit circle}
\label{sec:unitcircle}

\usetikzlibrary{shapes.misc}
\tikzset{cross/.style={cross out, draw=black, fill=none, minimum size=2*(#1-\pgflinewidth), inner sep=1pt, outer sep=1pt}, cross/.default={3pt}}

\newcommand\Iteration{3}
\newcommand\radius{1.5}
\newcommand\coor{\radius*5/4}
\def\SignalPos{35}

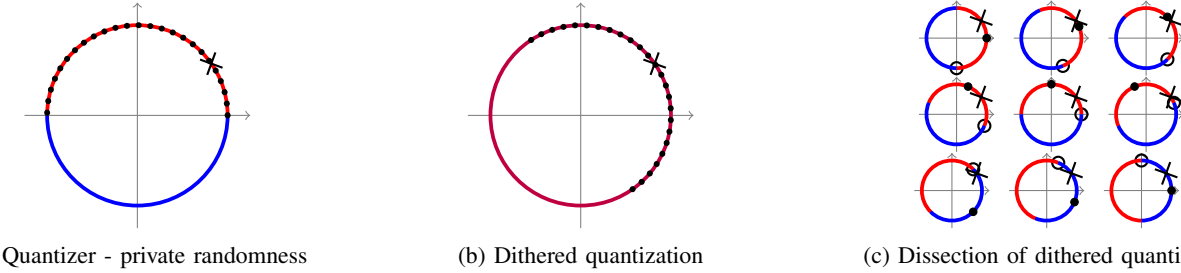
\begin{figure*}[ht]

\centering
\def\radius{1.2}
\begin{subfigure}[b]{0.31\textwidth}
\centering
\begin{tikzpicture}
\draw[] (0,0) circle (\radius cm);
\draw[gray, ->] (-\coor,0) -- (\coor,0);
\draw[gray, ->] (0,-\coor) -- (0,\coor);
 \draw[line width=0.5mm, red] ({\radius*cos(0)}, {\radius*sin(0)}) arc ({0:180:\radius});
\draw[line width=0.5mm, blue] ({\radius*cos(0)}, {\radius*sin(0)}) arc ({0:-180:\radius});
\draw[thick] ({\radius*cos(\SignalPos)}, {\radius*sin(\SignalPos)})  node[cross,rotate=25] {};
\draw[line width=0.8mm, line cap=round, dash pattern=on 0mm off 1.9\pgflinewidth] 
({\radius*cos(0)}, {\radius*sin(0)}) arc ({0:180:\radius});
\end{tikzpicture}
\caption{Quantizer - private randomness}
\label{fig:single}
\end{subfigure}
\hfill
     \begin{subfigure}[b]{0.28\textwidth}
         \centering
\begin{tikzpicture}
\draw[] (0,0) circle (\radius cm);
\draw[gray, ->] (-\coor,0) -- (\coor,0);
\draw[gray, ->] (0,-\coor) -- (0,\coor);
 \draw[line width=0.5mm, purple] ({\radius*cos(0)}, {\radius*sin(0)}) arc ({0:360:\radius});
\draw[thick] ({\radius*cos(\SignalPos)}, {\radius*sin(\SignalPos)})  node[cross,rotate=25] {};
\draw[line width=0.8mm, line cap=round, dash pattern=on 0mm off 1.9\pgflinewidth] ({\radius*cos(\SignalPos-90)}, {\radius*sin(\SignalPos-90)}) arc ({\SignalPos-90:\SignalPos+90:\radius});
\end{tikzpicture}
         \caption{Dithered quantization}
         \label{fig:dithered}
     \end{subfigure}
     \hfill
     \begin{subfigure}[b]{0.35\textwidth}
         \centering
\def\radius{0.4}
\def\Show{8}
\def\Offset{180/\Show}
\foreach \n in {0,...,\Show}{
\begin{tikzpicture}
\draw[thick] (0,0) circle (\radius cm);
\draw[gray, ->] (-\coor,0) -- (\coor,0);
\draw[gray, ->] (0,-\coor) -- (0,\coor);
\draw[thick] ({\radius*cos(-90+\n*\Offset)}, {\radius*sin(-90+\n*\Offset)}) circle (0.08 cm);
\draw[line width=0.5mm, red] ({\radius*cos(-90+\n*\Offset)}, {\radius*sin(-90+\n*\Offset)}) arc ({-90+\n*\Offset:-90+\n*\Offset+180:\radius});
\draw[line width=0.5mm, blue] ({\radius*cos(-90+\n*\Offset)}, {\radius*sin(-90+\n*\Offset)}) arc ({-90+\n*\Offset:-90+\n*\Offset-180:\radius});
\draw[thick] ({\radius*cos(\SignalPos)}, {\radius*sin(\SignalPos)})  node[cross,rotate=25] {};
\ifnum \numexpr \n*\Offset - 90 \relax < \SignalPos
\draw[fill] ({\radius*cos(\n*\Offset)}, {\radius*sin(\n*\Offset)}) circle (0.5mm);
\else
\draw[fill] ({\radius*cos(\n*\Offset-180)}, {\radius*sin(\n*\Offset-180)}) circle (0.5mm);
\fi
\end{tikzpicture}
\ifnum \n = 2
\\
\fi
\ifnum \n = 5
\\
\fi
}
         \caption{Dissection of dithered quantization}
         \label{fig:five over x}
     \end{subfigure}
        \caption{\small 1-bit quantizers on the unit-circle with perfect perceptual quality: ``$\times$'' indicates a sample realization of $X$; ``$\small \bullet$"  indicate the distribution of reconstruction $\hat{X}$; red and blue regions indicate the partition region associated with indices $+1$ and $-1$, respectively. In (a), the deterministic encoder is used. The sample is encoded as $+1$ and its reconstruction is distributed uniformly over the red region. In (b), the dithered approach is used, and the reconstruction would be distributed uniformly over the arc centered at the sample. There are no clear partitions in this case, and thus purple is used as a mixture of red and blue regions. In (c), ``$\circ$" indicates realizations of negative common randomness $-Z$, and the dithered quantization is viewed as a mixture of uncountably many deterministic quantizers, each associated with a realization of $Z$. }
        \label{fig:illustration}
\end{figure*}

\newcommand\M{4}
\def\Offset{180/\M}
\def\radius{1.0}
\def\coor{\radius*5/4}

\begin{figure*}[ht]
\label{fig:MultiQuan}
\centering
\vspace{-0.1cm}
\begin{tikzpicture}
\draw[->] (-\coor,0) -- (\coor,0);
\draw[->] (0,-\coor) -- (0,\coor);
\foreach \n in {0,...,\numexpr \M-1 \relax}{
 \draw[line width=0.5mm, red, semitransparent] ({\radius*cos(\n*\Offset)}, {\radius*sin(\n*\Offset)}) arc ({\n*\Offset:\n*\Offset+180:\radius});
\draw[line width=0.5mm, blue, semitransparent] ({\radius*cos(\n*\Offset)}, {\radius*sin(\n*\Offset)}) arc ({\n*\Offset:\n*\Offset-180:\radius});
}
\foreach \n in {0,...,\numexpr \M-1 \relax}{
\ifnum \numexpr \n*\Offset \relax < \SignalPos
\draw[line width=0.8mm, line cap=round, dash pattern=on 0mm off 1.9\pgflinewidth] ({\radius*cos(\n*\Offset+90-\Offset/2)}, {\radius*sin(\n*\Offset+90-\Offset/2)}) arc ({\n*\Offset+90-\Offset/2:\n*\Offset+90+\Offset/2:\radius});
\else
\draw[line width=0.8mm, line cap=round, dash pattern=on 0mm off 1.9\pgflinewidth] ({\radius*cos(\n*\Offset-90-\Offset/2)}, {\radius*sin(\n*\Offset-90-\Offset/2)}) arc ({\n*\Offset-90-\Offset/2:\n*\Offset-90+\Offset/2:\radius});
\fi
}
\draw[thick] ({\radius*cos(\SignalPos)}, {\radius*sin(\SignalPos)})  node[cross,rotate=25] {};
\node (=) at (\coor*1.3, 0) {=};
\end{tikzpicture}
\foreach \n in {0,...,\numexpr \M-1 \relax}{
\begin{tikzpicture}
\draw[->] (-\coor,0) -- (\coor,0);
\draw[->] (0,-\coor) -- (0,\coor);
 \draw[line width=0.5mm, red] ({\radius*cos(\n*\Offset)}, {\radius*sin(\n*\Offset)}) arc ({\n*\Offset:\n*\Offset+180:\radius});
\draw[line width=0.5mm, blue] ({\radius*cos(\n*\Offset)}, {\radius*sin(\n*\Offset)}) arc ({\n*\Offset:\n*\Offset-180:\radius});
\draw[thick] ({\radius*cos(\SignalPos)}, {\radius*sin(\SignalPos)})  node[cross,rotate=25] {};
\ifnum \numexpr \n*\Offset \relax < \SignalPos
\draw[line width=0.8mm, line cap=round, dash pattern=on 0mm off 1.9\pgflinewidth] ({\radius*cos(\n*\Offset+90-\Offset/2)}, {\radius*sin(\n*\Offset+90-\Offset/2)}) arc ({\n*\Offset+90-\Offset/2:\n*\Offset+90+\Offset/2:\radius});
\else
\draw[line width=0.8mm, line cap=round, dash pattern=on 0mm off 1.9\pgflinewidth] ({\radius*cos(\n*\Offset-90-\Offset/2)}, {\radius*sin(\n*\Offset-90-\Offset/2)}) arc ({\n*\Offset-90-\Offset/2:\n*\Offset-90+\Offset/2:\radius});
\fi
\end{tikzpicture}
}
\caption{\small Staggered quantizers with $1$ bit coding rate and $2$ bits common randomness.}
\label{fig:mixing}
\end{figure*}
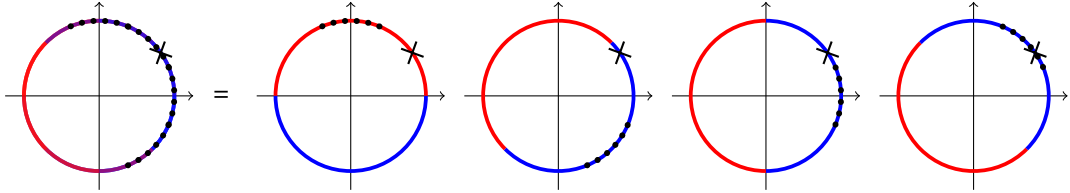
\subsection{Noise realization and staggered quantizers}

Consider again the unit circle setting at rate 1. An alternative view of a quantizer with common randomness is to consider the quantizer induced by fixing a realization of the common randomness $Z=z$, which is illustrated in Fig. \ref{fig:illustration} (c). It is seen that the partitions of these quantizers are in fact congruent to that shown in Fig. \ref{fig:illustration} (a). Since $Z$ is uniformly distributed on $[-\pi/2, \pi/2)$, the dithered quantization procedure is in fact mixing an uncountably many such quantizers, one for each $z\in [-\pi/2, \pi/2)$. Due to the common randomness $Z$, there is no need to inject decoder side randomness, which helps reduce the resultant distortion. 

The two types of quantizers considered in \cite{theis2021advantages} can then be viewed as two extremes of a class of quantizers: the former is a single quantizer with a deterministic encoder that relies solely on decoder side randomness for perception, while the latter is mixing (randomly selected using the common randomness) among uncountably many quantizers each with a deterministic encoder that requires no decoder side randomness. In between the two extremes, we can consider mixing staggered quantizers with deterministic encoders, which will need to rely on decoder side randomness to some extent. One such example with $N=4$ quantizers is illustrated in Fig. \ref{fig:mixing}. It can be seen that each individual quantizer only requires the decoder side randomness to be uniformly distributed on $1/8$ of the unit circle, instead of $1/2$ of the unit circle. As discussed earlier, decoder side randomness induces additional distortion, and this reduction in its range helps to reduce the distortion. As we increase the number of quantizers, the distortion is further reduced, eventually approaching that of the dithered quantizer.

\subsection{Staggered quantizers on the unit circle}
Generalizing the idea shown in Fig. \ref{fig:mixing}, we can use $N$ staggered $L$-level quantizers, each of which uniformly partitions the unit circle. The $N$ quantizers are obtained by offsetting sequentially by an amount of $2\pi/(LN)$ in terms of the angle on the unit circle. The common randomness uniformly selects one of $N$ quantizers, and the decoder adds private random noise uniformly distributed on $1/(2N)$ of the unit circle.

\begin{theorem} In the unit-circle setting, at perfect perceptual quality, $N$ staggered quantizers each with $L$ levels achieve the following rate-distortion pair.
\label{thm:LN}
\begin{align*}
(R,D)=\left( \log L, 2 - 2 \frac{\sin(\pi/(LN))}{\pi/(LN)} \frac{\sin(\pi/L)}{\pi/L} \right).
\end{align*}
\end{theorem}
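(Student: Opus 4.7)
The plan is to identify the right private-noise distribution, verify perfect perception by a tiling argument, and reduce the distortion to a product of two one-dimensional sinc integrals. First, I would set up the scheme explicitly: the common randomness selects an index $n$ uniformly from $\{0,1,\ldots,N-1\}$, indexing a quantizer whose cell centers lie at $c_{n,\ell} = \frac{2\pi n}{LN} + \frac{2\pi \ell}{L}$ for $\ell \in \{0,\ldots,L-1\}$. The encoder transmits the cell index, using $\log L$ bits and immediately giving the rate claim. The decoder outputs $\theta(\hat X) = c_{n,\ell} + V \pmod{2\pi}$, where $V$ is private noise uniform on $[-\pi/(LN),\pi/(LN))$ (i.e., covering $1/(LN)$ of the unit circle), independent of everything else.

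For perfect perception I would marginalize over the common randomness and $X$: since $\theta(X)$ is uniform on the circle, for each $n$ the realized cell index $\ell_n(X)$ is uniform on $\{0,\ldots,L-1\}$, so the marginal law of the selected center $c_{n,\ell_n(X)}$ is uniform on the $LN$-point equispaced grid $\{2\pi k/(LN): k = 0,1,\ldots,LN-1\}$. The noise $V$, whose support length equals the grid spacing $2\pi/(LN)$, tiles the circle when convolved with this grid, so $\theta(\hat X)$ is uniform on $(-\pi,\pi]$, i.e., $\hat X \stackrel{d}{=} X$.

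For the distortion I would use the circle identity $\|X - \hat X\|_2^2 = 2 - 2\cos(\theta(X) - \theta(\hat X))$. Writing $Y = \theta(X) - c_{n,\ell_n(X)}$, I would observe that for each fixed $n$ the quantity $Y$ is uniform on $[-\pi/L, \pi/L)$ (since $\theta(X)$ falls uniformly in each of the $L$ arcs), and that $Y$ is independent of $V$ by construction. Hence $\theta(\hat X) - \theta(X) = V - Y$, and by the independence and symmetry of $Y$ and $V$ about $0$ (so that $\mathbb{E}[\sin Y] = \mathbb{E}[\sin V] = 0$),
\begin{align*}
\mathbb{E}[\cos(V - Y)] = \mathbb{E}[\cos V]\,\mathbb{E}[\cos Y] = \frac{\sin(\pi/(LN))}{\pi/(LN)} \cdot \frac{\sin(\pi/L)}{\pi/L},
\end{align*}
each factor being the standard uniform-interval average of a cosine. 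Substituting into the distortion identity yields the claimed $D$.

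The main obstacle is the perception step: conditioned on a single $n$, the reconstruction is concentrated near only $L$ cell centers with a narrow smear of width $2\pi/(LN)$ and is manifestly non-uniform, so uniformity emerges only after averaging over $n$; the essential point is that the noise width $2\pi/(LN)$ is exactly matched to the combined-grid spacing, so that the congruent partitions of the $N$ staggered quantizers interleave into a single tiling of the circle, in direct analogy with the dithered picture of Fig.~\ref{fig:five over x}. Once the tiling is checked, the distortion reduces to a mechanical two-factor calculation.
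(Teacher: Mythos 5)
Your proof is correct and takes essentially the same approach as the paper: the same staggered construction, with the reconstruction error decomposed into the in-cell offset (uniform of width $2\pi/L$) and the centered private decoder noise (uniform of width $2\pi/(LN)$), so that the distortion reduces to $2-2\,\Eb[\cos V]\,\Eb[\cos Y]$ — you evaluate this via independence and the cosine addition formula, whereas the paper computes the corresponding double integral directly, giving the identical product of the two sinc factors. Your explicit tiling argument for perfect perceptual quality is a welcome addition, since the paper's proof leaves that verification implicit in the construction and Fig.~\ref{fig:mixing}.
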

The result subsumes the special case $N=1$ and $L=2$ given in \cite{theis2021advantages}.
\begin{proof}[Proof of Theorem \ref{thm:LN}]
Since each of $N$ quantifiers is uniform with $L$ levels, the rate for the corresponding quantization procedure is $\log L$. Due to symmetry, we analyze the distortion with a fixed quantizer. The arc (in angle) that the samples are quantized to the same index on has a length $(2\pi)/L$ since there are $L$ levels, and the inserted decoder noise is placed at the center of the arc uniformly distributed with a length $(2\pi)/(NL)$ since there are also $N$ quantizers. Since $\|(\cos(\theta), \sin(\theta))  - (\cos(\alpha), \sin(\alpha))\|^2 = 2(1 - \cos(\theta - \alpha))$, the distortion can then be calculated as
\begin{align*}
& = \frac{L}{2\pi} \frac{LN}{2\pi} \int_{-\pi/L}^{\pi/L} \left( \int_{-\pi/(NL)}^{\pi/(NL)} 2(1 - \cos(\theta - \alpha)) \dm \alpha \right) \dm \theta \\
& = 2 + \frac{L^2N}{2\pi^2} \int_{-\pi/L}^{\pi/L} \sin(\theta - \pi/(NL)) - \sin(\theta + \pi/(NL)) \dm \theta \\
& = 2 + \frac{L^2 N}{\pi^2} \left( \cos(\frac{\pi}{L}\frac{N+1}{N}) - \cos(\frac{\pi}{L} \frac{N-1}{N}) \right) \\
& = 2 - 2 \frac{\sin(\pi/(NL))}{\pi/(NL)} \frac{\sin(\pi/L)}{\pi/L},
\end{align*}
which is the desired result. 
\end{proof}

The next two theorems provide the fundamental limits of RDP coding and single-shot coding in the unit-circle setting.
\begin{theorem} In the unit-circle setting, the information-theoretic rate-distortion trade-off with perfect perceptual quality $R(D,0)$ is given by the pairs parametrized by $\lambda>0$
\label{thm:RDPunitcircle}
\begin{align*}
{\Big \{} (R,D) & = {\Big (}\log(2\pi) - h(Z), \Eb[2 - 2\cos(Z)] {\Big)} :  \\
 &\qquad\quad Z \sim p(z; \lambda) = \frac{ e^{\lambda \cos(z)} }{ \int_{-\pi}^\pi e^{\lambda \cos(z')} \dm z'},~\lambda > 0 {\Big \}}.
\end{align*}
\end{theorem}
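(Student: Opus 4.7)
The plan is to reduce the problem to a maximum-entropy problem for a single circular ``noise'' variable, exploiting that the unit circle is a group under addition modulo $2\pi$. I would first reparametrize the joint $(X,\hat X)$ via the angular offset
\[
Z \triangleq \theta(\hat X) - \theta(X) \bmod 2\pi \in (-\pi,\pi],
\]
so that $\theta(\hat X) = \theta(X) + Z \bmod 2\pi$. Under this change of variables, the squared-error distortion simplifies to $\|X-\hat X\|_2^2 = 2-2\cos Z$, and the distortion constraint becomes $\Eb[2-2\cos Z] \le D$, a functional of the (joint) law of $Z$ and $X$.

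Next, I would lower-bound the mutual information using the perception constraint. Since $P_{\hat X} = P_X$ forces $\theta(\hat X)$ to be uniform on $(-\pi,\pi]$, we have $h(\theta(\hat X)) = \log(2\pi)$. Translation invariance of differential entropy on the circle gives $h(\theta(\hat X)\mid X) = h(Z\mid X)$, and therefore
\[
I(X;\hat X) = h(\theta(\hat X)) - h(\theta(\hat X)\mid X) = \log(2\pi) - h(Z\mid X) \;\ge\; \log(2\pi) - h(Z),
\]
where the inequality uses $h(Z\mid X)\le h(Z)$. Crucially, this bound depends only on the \emph{marginal} of $Z$, so it suffices to optimize over circular densities $p_Z$ on $(-\pi,\pi]$ subject to $\Eb[\cos Z]\ge 1-D/2$.

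The remaining optimization is a classical maximum-entropy problem on the circle, whose solution by Lagrange multipliers is the von Mises family
\[
p(z;\lambda) = \frac{e^{\lambda\cos z}}{\int_{-\pi}^{\pi} e^{\lambda\cos z'}\,\dm z'},
\]
with $\lambda>0$ chosen so that the distortion constraint is met with equality; as $\lambda$ ranges over $(0,\infty)$ this traces out $D\in(0,2)$, yielding the converse. For achievability, I would take $Z$ independent of $X$ with density $p(\cdot;\lambda)$ and set $\theta(\hat X) = \theta(X) + Z \bmod 2\pi$: independence of a uniform $\theta(X)$ and $Z$ makes $\theta(\hat X)$ uniform (perception satisfied), the distortion matches $\Eb[2-2\cos Z]$, and $h(Z\mid X)=h(Z)$ so the rate lower bound is attained with equality. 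The only technical point — hence the main (and mild) obstacle — is to justify the change of variables $(X,\hat X)\leftrightarrow(X,Z)$ and the entropy identity $h(\theta(\hat X)\mid X)=h(Z\mid X)$ rigorously; this follows from translation invariance of Lebesgue measure on the circle together with the fact that the map $\hat X\mapsto Z$ is a measurable bijection for each fixed $X$.
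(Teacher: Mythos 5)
Your proof is correct, and it reaches the same von Mises family, but via a mirrored decomposition of the mutual information, so the two arguments are worth contrasting. The paper expands $I(X;\hat X)=h(\theta)-h(\theta\mid\hat\theta)$, uses the uniformity of the \emph{source} to fix $h(\theta)=\log(2\pi)$, bounds $h(\theta\mid\hat\theta)=h(\beta\mid\hat\theta)\le h(\beta)$ with $\beta=\theta-\hat\theta$, minimizes the Lagrangian $-h(\beta)+2\lambda\,\Eb[1-\cos\beta]$ by calculus of variations, and then must close the remaining gap with an explicit density computation showing that $\beta$ is independent of $\hat\theta$ in the optimal coupling (this is where uniformity of $\hat\theta$, i.e.\ the perception constraint, enters). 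You instead expand $I(X;\hat X)=h(\hat\theta)-h(\hat\theta\mid X)=\log(2\pi)-h(Z\mid X)\ge\log(2\pi)-h(Z)$, so the perception constraint is used up front to pin $h(\hat\theta)$, the converse reduces to a constrained maximum-entropy problem on the circle with constraint $\Eb[\cos Z]\ge 1-D/2$, and equality is immediate in the forward direction because choosing $Z$ independent of $X$ gives $h(Z\mid X)=h(Z)$ while making $\hat\theta$ uniform automatically. Your route buys a cleaner achievability/equality step and yields $R(D,0)$ directly for each $D$, so you do not need the paper's final Lagrangian-to-curve (supporting-hyperplane/convexity) argument; the paper's route has the minor virtue that its entropy lower bound uses only the source's uniformity, with perception invoked only at the equality stage. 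The measure-theoretic caveat you flag (existence of conditional densities, $h(Z\mid X)$ possibly $-\infty$) is shared by the paper's argument and is harmless for the converse, since it only makes the lower bound larger.
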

Note that this is the best that can be achieved using infinitely large coding blocks, and it is in general impossible to achieve using single-shot coding. 

\begin{proof}[Proof of Theorem \ref{thm:RDPunitcircle}] We aim to minimize the rate-distortion Lagrangian with perfect perceptual quality for any Lagrange multiplier $\lambda > 0$, i.e, 
\begin{align}
\min_{p_{\hat{X} | X}: \hat{X} \stackrel{d}{=} X} I(X; \hat{X}) + \lambda \Eb[ \|X - \hat{X}\|^2 ]. \label{eqn:RDlambda}
\end{align}
Due to perfect perceptual quality, the reconstructed signal $\hat{X}$ must lie on the unit circle, and we can represent $\hat{X}$ by its angle $\theta(\hat{X})$. The MSE distortion term $\|X - \hat{X}\|_2^2 $ can be written as $2(1 - \cos(\theta(X) - \theta(\hat{X}))).$
The mutual information can be lower bounded by
\begin{equation}
\begin{aligned}
I(X; \hat{X}) & = h(X) - h(X | \hat{X}) \geq h(X) - h(X - \hat{X}) \\
& = h(\theta(X)) - h(\theta(X) - \theta(\hat{X})). \label{eqn:derivation}
\end{aligned}
\end{equation}
For simplicity, from here on we will write $\theta = \theta(X)$ and $\hat{\theta} = \theta(\hat{X})$, and denote $\beta:= \theta - \hat{\theta}$.

Since $h(\theta(X))$ is a constant, we can consider the optimization problem below, equivalent to lower-bounding (\ref{eqn:RDlambda})
\begin{align}
\text{minimize}_{p(\beta)} - h(\beta) + 2 \lambda \Eb[(1 - \cos(\beta) )].
\end{align}
Using simple calculus of variation, it can be verified that the optimal distribution of $\beta$ for the optimization above is $p(\beta) = \frac{ e^{2 \lambda \cos(\beta)} }{ \int_{-\pi}^\pi e^{2 \lambda \cos(\beta')} \dm \beta'}$. Since $\beta$ is independent of $\theta$, the sum $\hat{\theta} = \theta + \beta$ has a uniform distribution over $[-\pi, \pi]$. Thus this distribution indeed provides a lower bound to (\ref{eqn:RDlambda}). 

To show that they are in fact equal, we only need to observe that in (\ref{eqn:derivation}), the only inequality can be written as
\begin{align}
I(X; \hat{X})&=h(X) - h(X | \hat{X})=h(\theta) - h(\theta | \hat{\theta}) \notag\\ 
&=h(\theta) - h(\beta| \hat{\theta})\geq h(\theta) - h(\beta). \label{eqn:derivation2}
\end{align}
However, observe that we have  
\begin{align*}
p_{\beta | \hat{\theta}}(\beta | \hat{\theta}) & = \frac{ p_{\beta, \hat{\theta}}(\beta, \hat{\theta}) }{ p_{\hat{\theta}}(\hat{\theta})} =  \frac{ p_{\beta, \theta} (\beta, \hat{\theta} - \beta ) }{ p_{\hat{\theta}}(\hat{\theta})} \\
& = \frac{ p_{\beta} (\beta) p_{\theta}( \hat{\theta} - \beta ) }{ p_{\hat{\theta}}(\hat{\theta})} = p_\beta(\beta),
\end{align*}
where the last step is because both $\theta$ and $\hat{\theta}$ are uniformly distributed marginally. This implies $\beta$ is in fact independent of $\hat{\theta}$, and $h(\beta| \hat{\theta}) = h(\beta)$, and therefore (\ref{eqn:derivation2}) becomes equality, which establishes the overall equality. Thus the rate-distortion pairs are indeed characterized by that given in Theorem \ref{thm:RDPunitcircle}.

It is not difficult to verify that the curve (or function) above is continuous, and its epigraph is non-empty and closed lying in the upper right quadrant. Each point on the curve naturally has a supporting hyperplane, since it is a solution of optimizing the corresponding Lagrangian. Thus by the partial converse of supporting hyperplane theorem the curve is convex. 
\end{proof}

\begin{theorem} In the unit-circle setting, the optimal scalar quantization (single shot coding) trade-off between the coding rate and the distortion with perfect perceptual quality is the piece-wise linear function with the following extreme points
\label{thm:OptimalQuan}
\begin{align*}
\left\{(R,D) = \left( \log L, 2 - 2\frac{\sin(\pi/L)}{\pi/L} \right): L = 1,2,3,\ldots \right\},
\end{align*}
which can be achieved by dithered quantizations. 
\end{theorem}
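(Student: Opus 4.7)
The plan is to prove achievability and converse at each integer rate $R=\log L$ and then extend to non-integer rates by time-sharing, representing points on the unit circle by their angles throughout.

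\textbf{Achievability at $R=\log L$.} I would use the $L$-level dithered quantizer: the procedure recalled in Section~\ref{sec:RD-Quan} with a uniform $L$-cell partition (cell width $2\pi/L$) and dither $Z\sim\mathrm{Unif}(-\pi/L,\pi/L)$. By the dither property, all $L$ indices are equally likely---so the entropy is exactly $\log L$---and $\theta(\hat X)-\theta(X)\sim\mathrm{Unif}(-\pi/L,\pi/L)$. A direct integration then gives
\[
\Eb\|X-\hat X\|^2=\Eb[2(1-\cos\tilde Z)]=2-\frac{L}{\pi}\int_{-\pi/L}^{\pi/L}\cos z\,\dm z=2-2\frac{\sin(\pi/L)}{\pi/L}.
\]

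\textbf{Converse at $R=\log L$.} Given any single-shot code at rate $\log L$ with perfect perception, common randomness, and possibly decoder-side private randomness, I would condition on all randomness other than $X$. Conditional on this, $X$ is still uniform on the circle, $\hat X$ lies on the circle a.s., and the scheme becomes a deterministic quantizer with at most $L$ on-circle reconstruction points. I then invoke the following lemma: the minimum distortion of any such $L$-point quantizer of a uniform circle source is $2-2\sin(\pi/L)/(\pi/L)$, attained by equally spaced points. To prove the lemma, let $\ell_1,\ldots,\ell_L$ be the Voronoi arc lengths, $\sum_i\ell_i=2\pi$; first-order optimality in each reconstruction angle forces it to be the midpoint of its cell (setting $\partial_\alpha$ of the cell cost to zero), reducing the distortion to $2-\tfrac{2}{\pi}\sum_i\sin(\ell_i/2)$. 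Since $\sin$ is concave on $[0,\pi]$ and $\sum_i\ell_i/2=\pi$, Jensen's inequality then gives $\sum_i\sin(\ell_i/2)\le L\sin(\pi/L)$ with equality iff the cells are equal, yielding the desired minimum. Averaging the conditional bound preserves it.

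\textbf{Extension to non-integer rates and main obstacle.} With $f(L)=2-2\sin(\pi/L)/(\pi/L)$, time-sharing the $L$- and $(L+1)$-level dithered quantizers (selection via common randomness) attains the chord between $(\log L,f(L))$ and $(\log(L+1),f(L+1))$, with perfect perception preserved since both components are circle-uniform. For the matching lower bound at $R\in(\log L,\log(L+1))$, I would decompose any scheme as a mixture of integer-rate pieces with rates $\log L_j$, distortions $D_j\ge f(L_j)$, and weights $\alpha_j$, so that $D\ge\sum_j\alpha_j f(L_j)$; invoking convexity of the piecewise linear interpolation $f_{\mathrm{lin}}$---verified by a short monotonicity check on the chord slopes $(f(L+1)-f(L))/\log((L+1)/L)$, essentially concavity of $L\sin(\pi/L)$ in $\log L$---then yields $D\ge f_{\mathrm{lin}}(R)$. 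I expect the load-bearing step to be the integer-rate converse lemma: the midpoint-of-cell reduction to $2-\tfrac{2}{\pi}\sum_i\sin(\ell_i/2)$ followed by Jensen is the crux, while the chord-slope convexity check and the achievability calculation are routine.
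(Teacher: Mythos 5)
Your achievability calculation and your integer-rate lower bound are correct, and your route to the latter differs from the paper's in an interesting way. Where you set each reconstruction at the Voronoi-cell midpoint, reduce the distortion to $2-\tfrac{2}{\pi}\sum_i\sin(\ell_i/2)$, and then apply Jensen's inequality to the concave function $\sin$ on $[0,\pi]$ under the constraint $\sum_i\ell_i/2=\pi$, the paper instead bounds the Lagrangian $H(f(X;V)\mid V)+\lambda\,\Eb[d]$ and, in the appendix, studies the split of a pair of adjacent cells through first- and second-derivative sign analysis to conclude that adjacent cells must be equal. For the fixed-cell-count problem, your Jensen argument is cleaner and more self-contained.

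There is, however, a real gap in how you pass from integer rates to the full piecewise-linear curve, and it comes from what the paper means by ``rate.'' The paper's converse measures rate by the conditional entropy $H(f(X;V)\mid V)$, so a \emph{single} deterministic quantizer with $K$ nonuniform cells is a legitimate competitor at any rate $H\le\log K$; it is not ``a mixture of integer-rate pieces.'' Your argument, after conditioning on all randomness, bounds the distortion of such a $K$-cell quantizer by $f(K)$, but when its entropy $H$ is strictly below $\log K$ this gives $D\ge f(K)=f_{\mathrm{lin}}(\log K)\le f_{\mathrm{lin}}(H)$, which is the wrong direction: it does not rule out a nonuniform quantizer landing \emph{below} the piecewise-linear curve. (A concrete candidate to worry about: a 3-cell quantizer with arc probabilities $(1/2,1/4,1/4)$ has entropy $1.5$ bits; the Jensen bound $D\ge f(3)$ says nothing about whether it beats the chord between $(\log 2,f(2))$ and $(\log 3,f(3))$ at $R=1.5$.) To close the gap one must show directly that every $(H(p),\,2-\tfrac{2}{\pi}\sum_i\sin(\pi p_i))$ lies on or above $f_{\mathrm{lin}}$, which is exactly what the paper's per-$\lambda$ Lagrangian analysis accomplishes---the adjacent-cell derivative computation in the appendix is what forces equal cells for \emph{every} Lagrange multiplier, not merely for fixed $L$. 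If you intend the theorem only in the fixed-rate-plus-time-sharing sense your proof is complete, but the paper's own proof (and its use of $H(f(X;V)\mid V)$) makes a stronger entropy-constrained claim that your decomposition step does not reach.
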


As $N\rightarrow \infty$, we see that $\frac{\sin(\pi/(LN))}{\pi/(LN)}\rightarrow 1$, therefore, the performance of the staggered quantizer approaches that of dithered quantization in this setting. Due to the uniform data source distribution, dithered quantizers are optimal, and $N$ staggered quantizers each with $L$ levels each does not offer any advantage over dithered quantizers. However, as we will discuss in the next section, this is not the case in general, since the flexibility in entropy coding can lead to an additional edge. 
\begin{proof}[Proof of Theorem \ref{thm:OptimalQuan}]
Any codecs $(f, g)$ can be represented by $f: \Xc \times \Rb \rightarrow \Zb$ and $g: \Zb \times \Rb \rightarrow \Xc$. The signal $X$ is encoded by $f(X, V)$ to some integer and then reconstructed by $\hat{X} = g( f(X, V), V )$, where $V$ is the common randomness. 

Due to the perfect perceptual quality requirement, the reconstructed signal $\hat{X}$ must lie on the unit circle. Without considering perceptual quality, we first characterize the scalar optimal quantization under the condition that reconstruction $\hat{X}$ lies on the unit circle. Take any Lagrange multiplier $\lambda > 0$, consider minimizing the following rate distortion Lagrangian with decision variables $(f, g, V)$
\begin{align*}
& H(f(X; V) | V) + \lambda \Eb_{X, V}[ d(X,  g( f(X ; V) ; V) ) ] \notag \\
& = \Eb_{V}[ \Eb_{X}[ -\log(\Pb(f(X; V) | V)) + \lambda d(X,  g( f(X ; V) ; V) ) | V] ]
\end{align*}
It suffices to study the deterministic quantizer, since for any stochastic quantizer $(f, g, V)$, there exists a deterministic quantizer $(f(; v), g(;, v))$ with some realization of $V = v$ such that its Lagrangian is at most that of the stochastic quantizer. 

It is straightforward to verify that the optimal deterministic quantizer in this setting must have contiguous regions (pathological cases may exist for complex distributions \cite{gyorgy2002structure}), i.e., the region in $\Xc$ of the same index $f(\cdot,v)$ should be contiguous.  For such a quantizer with $L$ levels, i.e., $|f(\cdot,v)|=L$, it can then be shown using calculus that it must be a uniform quantizer. The optimal scalar quantization (single shot coding) trade-off between the coding rate and the distortion is the piece-wise linear function with the following extreme points
\begin{align*}
\left\{(R,D) = \left( \log L, 2 - 2\frac{\sin(\pi/L)}{\pi/L} \right): L = 1,2,3,\ldots \right\}.
\end{align*}
This piece-wise linear function is a lower bound, when considering perfect perceptual quality. However, it is straightforward to verify that dithered quantization has the perfect perceptual quality and can achieve the extreme points and thus match the lower bound. Thus the optimal scalar quantization trade-off between the coding rate and the distortion with perfect perceptual quality is also the piece-wise linear function above and can be achieved by time-sharing dithered quantizers.
\end{proof}


\section{Design of staggered quantizers for general scalar sources} \label{sec:meth}

Consider applying the staggered quantization approach to a general scalar source. Assuming there are $N$ uniform quantizers to be staggered, the encoding function $f_n(x)$ for the $n$-th quantizer with stepsize $\Delta$ is 
\begin{align}
f_n(x)=\left\lfloor\frac{x}{\Delta}-\frac{n}{N}\right\rceil, \quad n=0,1,2,\ldots,N-1, \label{eqn:uniform-quantizer}
\end{align}
where $\lfloor\cdot\rceil$ is the operation that rounds to the nearest integer.

To achieve perfect perceptual quality, decoder side randomness must be used, yet due to the potential non-uniformity of the distribution, it is more involved than simply subtracting certain random values. To present the procedure, first denote the density of the data source $X$ as $p_X(x)$ and denote by $F_X(x) = \Pb(X \leq x )$ its cumulative distribution function. Denote its inverse as $F^{-1}_X(t) \triangleq \inf\{x:  F_X(x) > t\}$ for any $t \in [0, 1)$. Let us introduce a density function on $[a,b]$ as $q_{a,b}(x) \triangleq \frac{p_X(x)}{\int_{a}^b p_X(t) dt}$. A random variable generated privately at the decoder side according to this distribution is denoted as $\tilde{Z}_{a,b}$, which is independent of all the other random variables.

Define an indexing function $m(x, n) = N \cdot f_n(x) + n$, which essentially specifies an order of all the quantization cells in all these $N$ quantizers. Define its inverse at input $x$ as $m^{-1}_X(j)\triangleq \inf\{x: \exists n \in [0:N-1], m(x, n) = j\}$. Intuitively, for each quantizer and quantizer cell index pair $(n,f_n(x))$, the reconstruction at the decoder is a random variable that follows a distribution that matches the data sample distribution in an interval. Now to specify the specific interval, we 
define a sequence of boundaries $(a(j), b(j))_{j \in \Zb}$ as 
\begin{align*}
a(j) \triangleq F_X^{-1}\left( \sum_{k =  1}^{N} \frac{F_X(m_x^{-1}(j - k))}{N} \right),~ b(j-1) \triangleq a(j). 
\end{align*}

The encoding and reconstruction process can now be described as follows. Given data source $X$ at the encoder side, the encoding procedure uniformly at random selects one of the $N$ encoders $\{f_0, f_1, \ldots, f_{N-1}\}$ with stepsize $\Delta$. The index $n$ of the selected encoder is a common randomness shared by the decoder, and the data sample is encoded as $f_n(X)$. At the decoder, we compute the index $j$ using $f_n(x)$ and $n$ by the indexing function $m(\cdot)$, and the reconstruction is a random sample $\hat{X} = \tilde{Z}_{a(j), b(j)}$. More formally, the decoding function upon receiving code $f_n(X) = i$ is
\begin{align}
g(i) = \tilde{Z}_{a(j), b(j)},~\text{ with } j = Ni + n, \label{eqn:MutliQuan-generator}
\end{align}
where $n$ is the common randomness of the offset quantizer index. We remark here that the offsets can be viewed as a random dither which takes discrete values in $\{0,1/N,2/N, \ldots,(N-1)/N\}$. However, for each realization, the reconstruction is sampled in an interval, unlike in classic deterministic quantizers or dithered quantizers. An illustration is given in Fig. \ref{fig:general}.

\begin{figure}[t]
\begin{center}
 \includegraphics[width=0.47\textwidth]{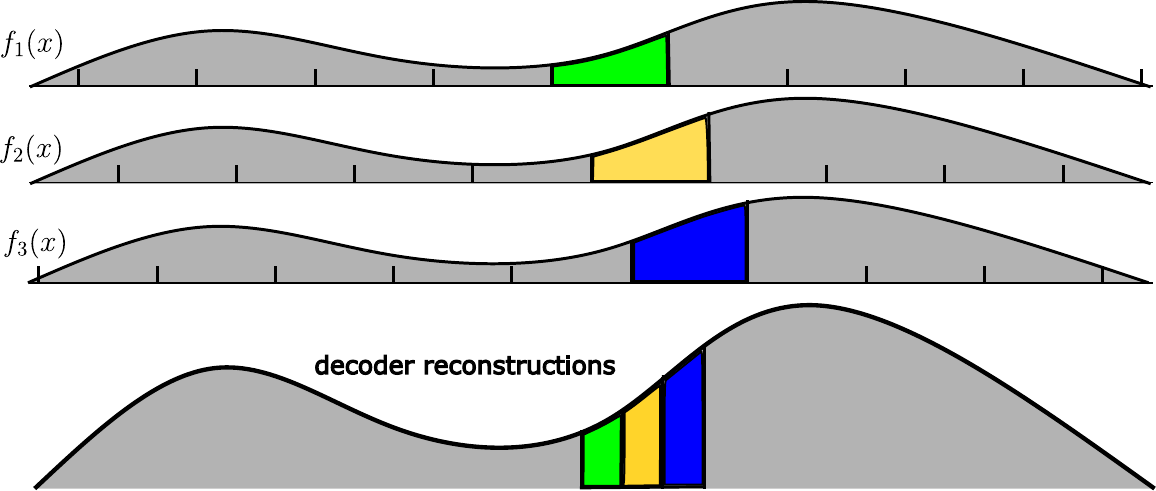}
 \caption{\small Staggered quantizers for general probability distributions.}
  \label{fig:general}
\end{center}
\end{figure}

Since the number of staggered quantizers is small, it is possible to design tailored entropy code for each, whereas this is impossible for dithered quantizers, resulting in a rate close to $H(f(X+Z))$. Dithered quantization also suffers because $X+\tilde{Z}$ induces loss of perception, and an additional shaping step is required. As shown in Fig. \ref{fig:uniform}, the proposed approach can sometimes outperform both dithered quantizers and deterministic encoders. Particularly, even mixing $2$ quantizers appears to provide competitive performance. 


\begin{figure}[t]
\begin{center}
 \includegraphics[width=0.47\textwidth]{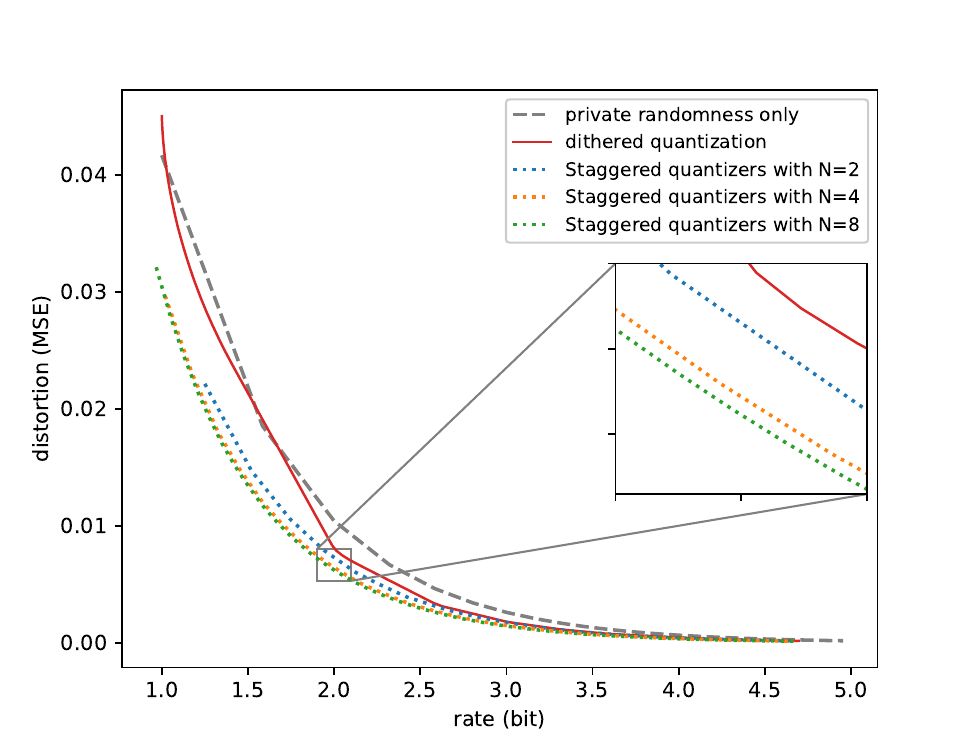}
 \caption{\small Quantization of a uniformly distributed source on an interval}
  \label{fig:uniform}
\end{center}
\end{figure}

\section{Conclusion}
We consider RDP coding from a quantizer design perspective. By decomposing dithered quantization, we obtain staggered quantizers as intermediates between the two extremes of dithered quantization and quantization without common randomness. This new perspective provides a new way to understand one-shot coding for RDP. 

\bibliographystyle{IEEEtran}

\begin{thebibliography}{10}
\providecommand{\url}[1]{#1}
\csname url@samestyle\endcsname
\providecommand{\newblock}{\relax}
\providecommand{\bibinfo}[2]{#2}
\providecommand{\BIBentrySTDinterwordspacing}{\spaceskip=0pt\relax}
\providecommand{\BIBentryALTinterwordstretchfactor}{4}
\providecommand{\BIBentryALTinterwordspacing}{\spaceskip=\fontdimen2\font plus
\BIBentryALTinterwordstretchfactor\fontdimen3\font minus
  \fontdimen4\font\relax}
\providecommand{\BIBforeignlanguage}[2]{{%
\expandafter\ifx\csname l@#1\endcsname\relax
\typeout{** WARNING: IEEEtran.bst: No hyphenation pattern has been}%
\typeout{** loaded for the language `#1'. Using the pattern for}%
\typeout{** the default language instead.}%
\else
\language=\csname l@#1\endcsname
\fi
#2}}
\providecommand{\BIBdecl}{\relax}
\BIBdecl

\bibitem{gersho1992vector}
A.~Gersho and R.~M. Gray, \emph{Vector quantization and signal
  compression}.\hskip 1em plus 0.5em minus 0.4em\relax Springer, 1992, vol.
  159.

\bibitem{berger1971rate}
T.~Berger, \emph{Rate distortion theory: A mathematical basis for data
  compression}.\hskip 1em plus 0.5em minus 0.4em\relax Prentice-Hall series in
  information and system sciences, 1971.

\bibitem{blau2019rethinking}
Y.~Blau and T.~Michaeli, ``Rethinking lossy compression: The
  rate-distortion-perception tradeoff,'' in \emph{International Conference on
  Machine Learning}.\hskip 1em plus 0.5em minus 0.4em\relax PMLR, 2019, pp.
  675--685.

\bibitem{wagner2022rate}
A.~B. Wagner, ``The rate-distortion-perception tradeoff: The role of common
  randomness,'' \emph{arXiv preprint arXiv:2202.04147}, 2022.

\bibitem{theis2021advantages}
L.~Theis and E.~Agustsson, ``On the advantages of stochastic encoders,'' in
  \emph{Neural Compression: From Information Theory to Applications--Workshop@
  ICLR 2021}, 2021.

\bibitem{chen2022rate}
J.~Chen, L.~Yu, J.~Wang, W.~Shi, Y.~Ge, and W.~Tong, ``On the
  rate-distortion-perception function,'' \emph{IEEE Journal on Selected Areas
  in Information Theory}, vol.~3, no.~4, pp. 664--673, 2022.

\bibitem{serra2023computation}
G.~Serra, P.~A. Stavrou, and M.~Kountouris, ``Computation of
  rate-distortion-perception function under f-divergence perception
  constraints,'' \emph{arXiv preprint arXiv:2305.04604}, 2023.

\bibitem{hamdi2023rate}
Y.~Hamdi and D.~G{\"u}nd{\"u}z, ``The rate-distortion-perception trade-off with
  side information,'' \emph{arXiv preprint arXiv:2305.13116}, 2023.

\bibitem{salehkalaibar2024rate}
S.~Salehkalaibar, J.~Chen, A.~Khisti, and W.~Yu, ``Rate-distortion-perception
  tradeoff based on the conditional-distribution perception measure,''
  \emph{arXiv preprint arXiv:2401.12207}, 2024.

\bibitem{niu2023conditional}
X.~Niu, D.~G{\"u}nd{\"u}z, B.~Bai, and W.~Han, ``Conditional
  rate-distortion-perception trade-off,'' \emph{arXiv preprint
  arXiv:2305.09318}, 2023.

\bibitem{li2018strong}
C.~T. Li and A.~El~Gamal, ``Strong functional representation lemma and
  applications to coding theorems,'' \emph{IEEE Transactions on Information
  Theory}, vol.~64, no.~11, pp. 6967--6978, 2018.

\bibitem{ziv1985universal}
J.~Ziv, ``On universal quantization,'' \emph{IEEE Transactions on Information
  Theory}, vol.~31, no.~3, pp. 344--347, 1985.

\bibitem{zamir1992universal}
R.~Zamir and M.~Feder, ``On universal quantization by randomized
  uniform/lattice quantizers,'' \emph{IEEE Transactions on Information Theory},
  vol.~38, no.~2, pp. 428--436, 1992.

\bibitem{saldi2014randomized}
N.~Saldi, T.~Linder, and S.~Y{\"u}ksel, ``Randomized quantization and source
  coding with constrained output distribution,'' \emph{IEEE Transactions on
  Information Theory}, vol.~61, no.~1, pp. 91--106, 2014.

\bibitem{tian2005new}
C.~Tian, ``A new class of multiple description scalar quantizer and its
  application to image coding,'' \emph{IEEE Signal Processing Letters},
  vol.~12, no.~4, pp. 329--332, 2005.

\bibitem{tian2005staggered}
------, ``Staggered lattices in multiple description quantization,'' in
  \emph{Data Compression Conference}, 2005, pp. 398--407.

\bibitem{samarawickrama2010m}
U.~Samarawickrama, J.~Liang, and C.~Tian, ``$ m $-channel multiple description
  coding with two-rate coding and staggered quantization,'' \emph{IEEE
  transactions on circuits and systems for video technology}, vol.~20, no.~7,
  pp. 933--944, 2010.

\bibitem{zamir2014lattice}
R.~Zamir, \emph{Lattice Coding for Signals and Networks: A Structured Coding
  Approach to Quantization, Modulation, and Multiuser Information
  Theory}.\hskip 1em plus 0.5em minus 0.4em\relax Cambridge University Press,
  2014.

\bibitem{theis2021coding}
L.~Theis and A.~B. Wagner, ``A coding theorem for the
  rate-distortion-perception function,'' in \emph{Neural Compression: From
  Information Theory to Applications--Workshop@ ICLR 2021}, 2021.

\bibitem{yan2021perceptual}
Z.~Yan, F.~Wen, R.~Ying, C.~Ma, and P.~Liu, ``On perceptual lossy compression:
  The cost of perceptual reconstruction and an optimal training framework,'' in
  \emph{International Conference on Machine Learning}.\hskip 1em plus 0.5em
  minus 0.4em\relax PMLR, 2021, pp. 11\,682--11\,692.

\end{thebibliography}

\appendix

\begin{proof}[Optimality of Uniform Quantizers in the Proof of Theorem \ref{thm:OptimalQuan}]
Consider two adjacent Voronoi cells. Suppose the two adjacent regions have a total size (in terms of the angle spanned) $2\pi r$ for some $r \in (0, 1]$, moreover, suppose the first Voronoi is of size $2\pi \alpha$ for some $\alpha \in (0, r)$. For optimal partitions, $\alpha$ must be a minimizer of the following function
\begin{align*}
l(\alpha; r) &= (r - \alpha) \ln(r - \alpha) + \frac{\lambda}{\pi} \sin(\pi (r - \alpha)) \notag\\
&+ \alpha \ln(\alpha) + \frac{\lambda}{\pi} \sin(\pi \alpha).
\end{align*}
Its derivative is
\begin{align*}
l'(\alpha; r) =& - \ln(r - \alpha) - \lambda \cos(\pi(r - \alpha)) \notag\\
&\qquad+ \ln(\alpha) + \lambda \cos(\pi \alpha)
\end{align*}
and its second derivative is
\begin{align*}
l''(\alpha; r) = \frac{1}{r - \alpha} + \frac{1}{\alpha} -  \lambda \pi ( \sin(\pi(r - \alpha)) + \sin(\pi \alpha) ).
\end{align*}
It is not hard to verify that $l$ and $l''$ are even functions, and $l'$ is an odd function. There are two circumstances
\begin{enumerate}
\item $\lambda$ is small, and $l''(\alpha; r) \geq 0$. Then $l(\alpha; r)$ is a non-constant symmetric convex function whose optimal value is achieved by $\alpha \rightarrow 0$ or $\alpha \rightarrow r$, which conflicts with the fact that the optimal quantizer has non-empty Voronoi. 
\item $\lambda$ is large, and $l''(\alpha; r)$ will be positive on both ends and negative in the middle. $l'(\alpha; r)$ is increasing, decreasing and increasing. $l(\alpha; r)$ will either have a maximum with $\alpha = r/2$ or the maximum is approached by  $\alpha \rightarrow 0$ or $\alpha \rightarrow r$.
\end{enumerate}
Therefore any two adjacent non-empty Voronoi cells have the same size. The optimal quantizer thus must have equal-sized Voronoi cells, thus a uniform quantizer. 
\end{proof}

\end{document}